\newtheorem{thm}{Theorem}
\newtheorem{lem}{Lemma}
\begin{document}
\setlength{\textheight}{8.0truein}    

\runninghead{A limit theorem for a 3-period time-dependent quantum walk}
            {F. A. Gr{\"u}nbaum and T. Machida}

\normalsize\textlineskip
\thispagestyle{empty}
\setcounter{page}{1}


\vspace*{0.88truein}

\alphfootnote

\fpage{1}

\centerline{\bf
A limit theorem for a 3-period time-dependent quantum walk}
\vspace*{0.37truein}
\centerline{\footnotesize
F. Alberto Gr{\"u}nbaum}
\vspace*{0.015truein}
\centerline{\footnotesize\it Department of Mathematics, University of California,}
\baselineskip=10pt
\centerline{\footnotesize\it Berkeley, CA, 94720, USA}
\vspace*{0.225truein}
\centerline{\footnotesize
Takuya Machida}
\vspace*{0.015truein}
\centerline{\footnotesize\it Department of Mathematics, University of California,}
\baselineskip=10pt
\centerline{\footnotesize\it Berkeley, CA, 94720, USA}
\vspace*{0.015truein}
\centerline{\footnotesize\it Research Fellow of Japan Society for the Promotion of Science,}
\baselineskip=10pt
\centerline{\footnotesize\it Meiji University, Nakano Campus, 4-21-1 Nakano, Nakano-ku, Tokyo 164-8525, Japan}
\vspace*{0.225truein}

\vspace*{0.21truein}

\abstracts{
We consider a discrete-time 2-state quantum walk on the line.
The state of the quantum walker evolves according to a rule which is determined by a coin-flip operator and a position-shift operator.
In this paper we take a 3-periodic time evolution as the rule.
For such a quantum walk, we get a limit distribution which expresses the asymptotic behavior of the walker after a long time.
The limit distribution is different from that of a time-independent quantum walk or a 2-period time-dependent quantum walk. We give some analytical results and then consider a number of variants of our model and indicate the result of simulations for these ones.
}{}{}

\vspace*{10pt}

\keywords{time-dependent quantum walk, limit distribution}

\vspace*{1pt}\textlineskip    

\bibliographystyle{qic}

\section{Introduction}
Quantum walks (QWs) are considered to be a quantum analog of classical random walks.
The system and the dynamics of QWs have some similarities to those of random walks, but the behavior of QWs is different from that of random walks in terms of their probability distributions .
In general, the behavior of the QWs can not be predicted based on our intuition.
A 3-period time-dependent QW which we are going to consider in this paper leads to an interesting behavior.
We study this behavior after a large number of discrete time steps and describe it as a long-time limit theorem.
The theorem will be given as a convergence in distribution on a rescaled space by time. The fact that the relevant scale is time itself and not its square root has been observed from the very first papers in the subject, \cite{AharonovDavidovichZagury1993}.
For a time-independent standard QW on the line, a limit distribution was obtained by Konno~\cite{Konno2002a,Konno2005} in 2002 for the first time and the limit density function has a representation similar to an arcsine law, in marked contrast to a Gauss distribution  which appears for classical random walks under appropriate conditions.
Time-dependent QWs were numerically studied in some papers~\cite{MackayBartlettStephensonSanders2002,RibeiroMilmanMosseri2004,BanulsNavarretePerezRoldanSoriano2006,Romanelli2009} and some limit theorems were analytically derived \cite{MachidaKonno2010,Machida2011,Machida2013b,IdeKonnoMachidaSegawa2011}.
In particular, Machida and Konno~\cite{MachidaKonno2010} treated a 2-period discrete-time QW on the line whose time evolution is given by two unitary matrices which are used as coin-flip operators.
The long time behavior of the 2-period time-dependent walk can be completely determined by one of the two matrices according to the determinant of the product of both of them.

In this paper we define a 3-period time-dependent discrete-time QW on the line and we will see that this 3-period time-dependent walk also exhibits interesting behavior.
The motivation for the analytical study for the 3-period time-dependent walk done here comes from numerical studies done in Ribeiro et al.~\cite{RibeiroMilmanMosseri2004}.
Besides periodic time-dependent walks, they also looked at time-dependent QWs whose coin-flip operator was controlled by a quasiperiodic sequence or a random sequence.
According to their result, we can expect that the long time behavior of a walk with a long period is sub-ballistic or diffusive.
That means that as the length of the period increases, the behavior of the periodic time-dependent walks gets either less ballistic or more diffusive departing form the behavior of a time-independent quantum walk.
So, we would see a different behavior for a periodic QW depending on the length of the period, and this would be important in order to discuss the relationship between QWs and random walks.

We will define a 3-period time-dependent QW on the line in the following section.
The walker starts from the origin on the lattice $\mathbb{Z}=\left\{0,\pm 1,\pm2,\ldots\right\}$ at time 0 and from its state at time $t\in\left\{0,1,2,\ldots\right\}$ one gets the state at time $t+1$ after operating with  a coin-flip operator and a position-shift operator.
In our model the coin operator is 3-periodic as a function of time $t$, and we use just one and the same coin-flip operator in the evolution.
For the 3-period time-dependent walk, we give a limit theorem as $t\to\infty$ in Sec.~\ref{sec:limit_th}.
The proof of the theorem is based on Fourier analysis and is included in the same section.
In the final section, we give a summary and a discussion of our result.

There are two appendices: in the first one we show how the analytical proof can be made to work in the case of some unitary (as opposed to orthogonal) operators.
In the second one we look at a number of models not covered by our analytical results and give some interesting numerical evidence of their limiting behavior.

\section{Definition of a 3-period time-dependent QW on the line}
\label{sec:definition}
In this paper we deal with a discrete-time 2-state QW on the line and we give a 3-periodic time evolution rule for the walk.
The total system of a discrete-time 2-state QWs on the line is defined in a tensor space $\mathcal{H}_p\otimes\mathcal{H}_c$, where $\mathcal{H}_p$ is called a position Hilbert space which is spanned by an orthogonal normalized basis $\left\{\ket{x}:\,x\in\mathbb{Z}\right\}$ and $\mathcal{H}_c$ is called a coin Hilbert space which is spanned by an orthogonal normalized basis $\left\{\ket{0},\ket{1}\right\}$.
Let $\ket{\psi_{t}(x)} \in \mathcal{H}_c$ be the state of the walker at position $x$ at time $t$.
The state of the 2-state QW on the line at time $t$ is expressed by $\ket{\Psi_t}=\sum_{x\in\mathbb{Z}}\ket{x}\otimes\ket{\psi_{t}(x)}\in\mathcal{H}_p\otimes\mathcal{H}_c$.
In particular, we focus on a 3-period time-dependent discrete-time QW whose coin-state is given by
\begin{align}
 C=&\cos\theta\ket{0}\bra{0}+\sin\theta\ket{0}\bra{1}+\sin\theta\ket{1}\bra{0}-\cos\theta\ket{1}\bra{1}\nonumber\\
 =&c\ket{0}\bra{0}+s\ket{0}\bra{1}+s\ket{1}\bra{0}-c\ket{1}\bra{1},
 \label{eq:coin-flip operator}
\end{align}
with $\theta\in [0,2\pi)$ and we have abbreviated $\cos\theta, \sin\theta$ to $c, s$ in Eq.~(\ref{eq:coin-flip operator}).
The total system at time $t$ evolves to the next state at time $t+1$ according to the time evolution rule
\begin{equation}
 \ket{\Psi_{t+1}}=\left\{\begin{array}{ll}
		   \tilde{S}\tilde{C}\ket{\Psi_t}& (t=0,1 \mod 3)\\[1mm]
			  \tilde{S}\ket{\Psi_t}& (t=2 \mod 3)
			 \end{array}\right.,
\label{eq:time-evolution}
\end{equation}
where
\begin{align}
 \tilde{C}=&\sum_{x\in\mathbb{Z}}\ket{x}\bra{x}\otimes C,\\
 \tilde{S}=&\sum_{x\in\mathbb{Z}}\ket{x-1}\bra{x}\otimes\ket{0}\bra{0}+\ket{x+1}\bra{x}\otimes\ket{1}\bra{1}.
\end{align}
The time evolution of the state $\ket{\Psi_t}$ depends on the value $t \mod 3$.
Equation~(\ref{eq:time-evolution}) states that the position of the walker gets shifted after the coin-flip operation has been completed at time $t=0,1 \mod 3$, and it just gets shifted without any coin-flip operation at time $t=2 \mod 3$.
Here, we don't take $\theta=0,\frac{\pi}{2},\pi,\frac{3\pi}{2}$ because the behavior of the walker would be trivial.
Under the condition $\braket{\Psi_0|\Psi_0}=1$, the quantum walker can be observed at position $x$ at time $t$ with probability
\begin{equation}
 \mathbb{P}(X_t=x)=\bra{\Psi_t}\biggl\{\ket{x}\bra{x}\otimes (\ket{0}\bra{0}+\ket{1}\bra{1})\biggr\}\ket{\Psi_t},
\end{equation}
where $X_t$ is a random variable and denotes the position of the walker at time $t$, regardless of the spin orientation.
The probability distribution evolves as a function of time $t$, as numerically shown in Fig~\ref{fig:time-probability}.
Actually, this linear behavior is reflected in a limit theorem which will show up after this section.
We also show how the time evolution of the probability distribution depends on the parameter $\theta$ of the coin-flip operator $C$ in Fig.~\ref{fig:theta-probability}.
We will analyze the long time behavior of this probability distribution $\mathbb{P}(X_t=x)$ as $t\to\infty$ in the next section, concentrating on values of time that are of the form $3t$. Other values of time show an undistinguishable behavior (see also Appendix~\ref{app:3t+1_and_3t+2}).

\begin{figure}[h]
\begin{center}
 \begin{minipage}{70mm}
  \begin{center}
   \includegraphics[scale=0.7]{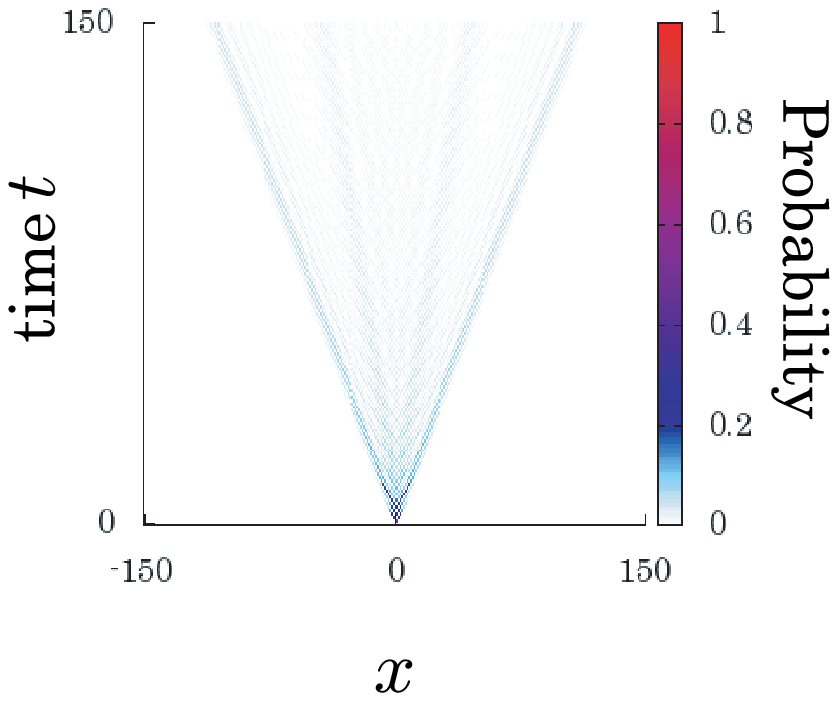}\\[2mm]
  (a) $\theta=\frac{\pi}{4}$
  \end{center}
 \end{minipage}
 \begin{minipage}{70mm}
  \begin{center}
   \includegraphics[scale=0.7]{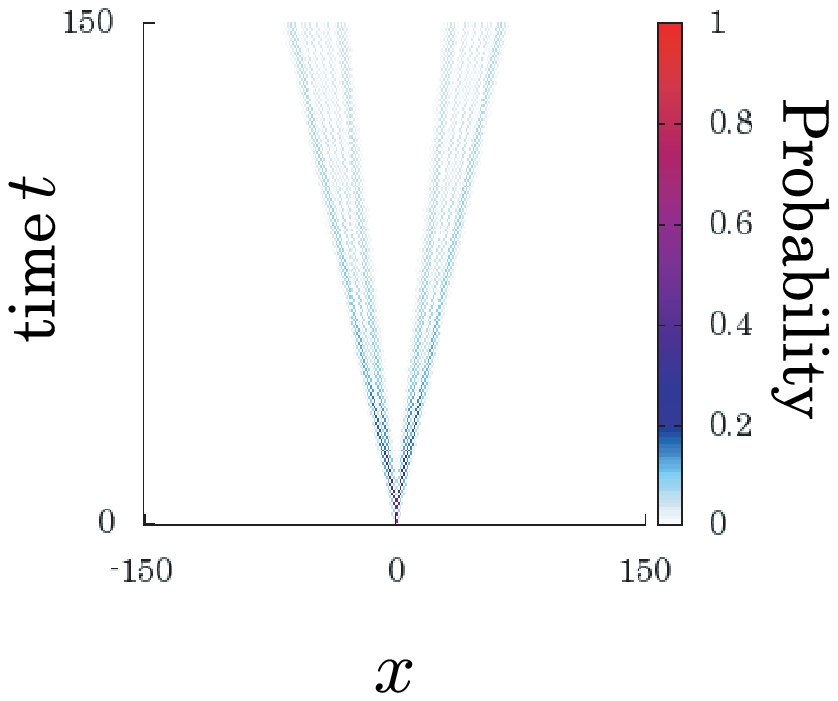}\\[2mm]
  (b) $\theta=\frac{2\pi}{5}$
  \end{center}
 \end{minipage}
\vspace{5mm}
\fcaption{Time evolution of probability distributions in the case of $\alpha=1/\sqrt{2}, \beta=i/\sqrt{2}$}
\label{fig:time-probability}
\end{center}
\end{figure}

\begin{figure}[h]
\begin{center}
 \begin{minipage}{70mm}
  \begin{center}
  \includegraphics[scale=0.7]{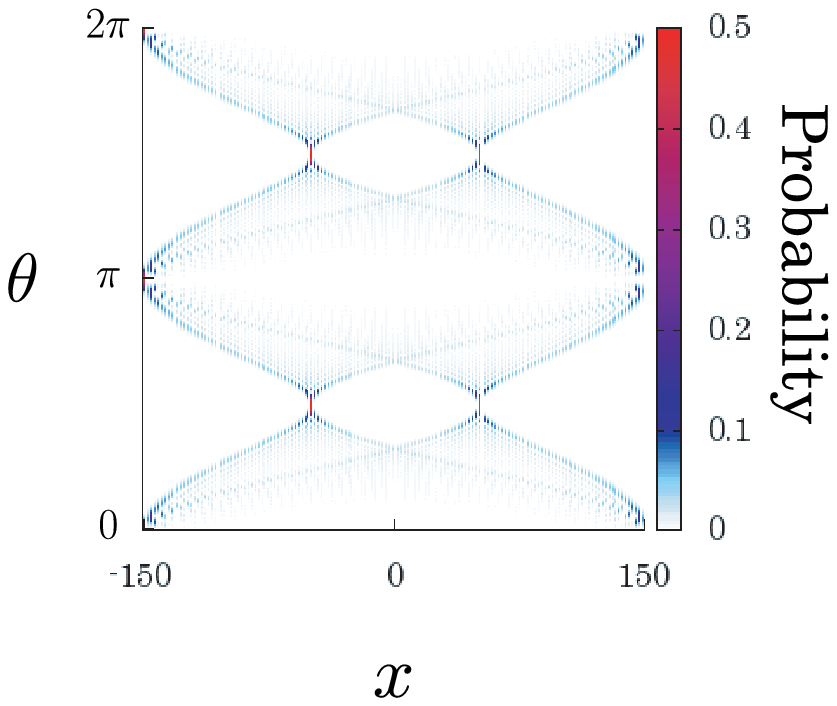}\\[2mm]
  (a) $\alpha=1/\sqrt{2},\,\beta=i/\sqrt{2}$
  \end{center}
 \end{minipage}
 \begin{minipage}{70mm}
  \begin{center}
   \includegraphics[scale=0.7]{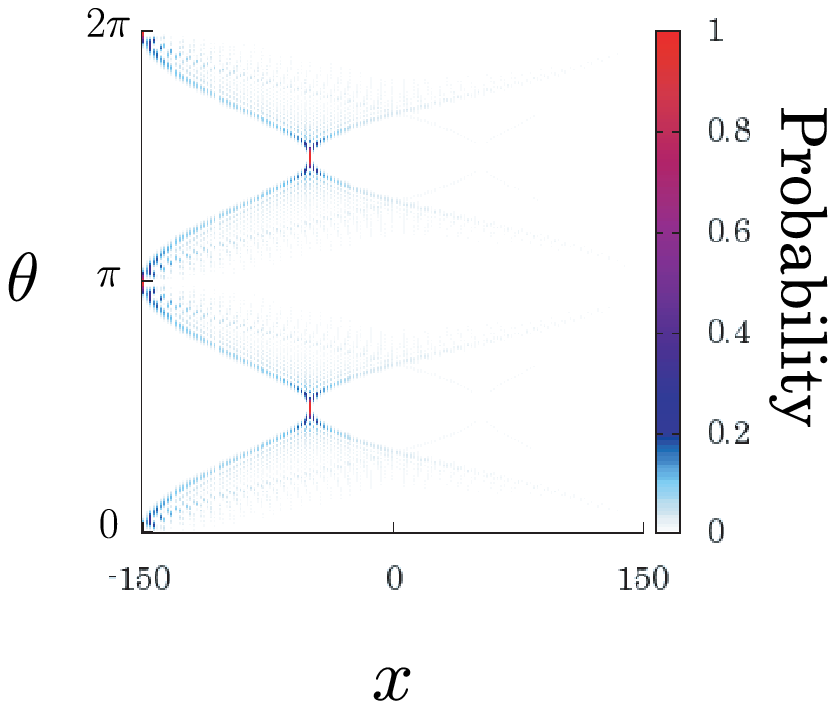}\\[2mm]
  (b) $\alpha=1,\,\beta=0$
  \end{center}
 \end{minipage}
\vspace{5mm}
\fcaption{The relationships between the probability distribution at time $t=150$ and the parameter $\theta$ which determines the coin-flip operator $C$}
\label{fig:theta-probability}
\end{center}
\end{figure}

\section{Long-time limit theorem and its proof}
\label{sec:limit_th}

We get a long-time limit theorem for the probability distribution and its proof in this section assuming that the walker starts from the origin.
Let us take an initial state $\ket{\Psi_0}=\ket{0}\otimes\left(\alpha\ket{0}+\beta\ket{1}\right)$ with $|\alpha|^2+|\beta|^2=1$.
This initial condition means that the walker starts from the origin because of $\mathbb{P}(X_0=0)=1$. 
Then we obtain a limit theorem for the 3-period time-dependent QW.

\begin{thm}
 \begin{align}
   \lim_{t\to\infty}\mathbb{P}\left(\frac{X_{3t}}{3t}\leq x\right)
    =\int_{-\infty}^x \Biggl[&\left\{1-\nu(\alpha,\beta; y)\right\}f(y)I_{\left(\frac{1-4c^2}{3},\frac{\sqrt{1+8c^2}}{3}\right)}(y) \nonumber\\
    &+\left\{1+\nu(\alpha,\beta; -y)\right\}f(-y)I_{\left(-\frac{\sqrt{1+8c^2}}{3},-\frac{1-4c^2}{3}\right)}(y)\Biggr]\,dy,\label{eq:cumulative}
 \end{align}
 where
 \begin{align}
  f(x)=&\frac{|s|\left(|s|x+\sqrt{D(x)}\right)^2}{\pi(1-x^2)\sqrt{W_{+}(x)}\sqrt{W_{-}(x)}\sqrt{D(x)}},\\[3mm]
  \nu(\alpha,\beta; x)=&\frac{1}{c(1+8c^2)}\left\{9c^3(|\alpha|^2-|\beta|^2)+3s(1+6c^2)\Re(\alpha\overline{\beta})\right\}x\nonumber\\
  &+\frac{s}{c|s|(1+8c^2)}\left\{cs(|\alpha|^2-|\beta|^2)-(1+2c^2)\Re(\alpha\overline{\beta})\right\}\sqrt{D(x)},\\[3mm]
  D(x)=&1+8c^2-9c^2x^2,\\
  W_{+}(x)=&-(1-4c^2)+3(1-2c^2)x^2+2|s|x\sqrt{D(x)},\\
  W_{-}(x)=&1+8c^2-3(1+2c^2)x^2-2|s|x\sqrt{D(x)},\\[2mm]
  I_A(x)=&\left\{\begin{array}{cl}
	   1&(x\in A)\\
		  0&(x\notin A)
		 \end{array}\right.,
 \end{align}
 and $\Re(z)$ denotes the real part of the complex number $z$.
 \label{th:limit}
\end{thm}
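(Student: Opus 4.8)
The plan is to pass to Fourier (momentum) space, diagonalize the one-period evolution operator, and extract the limit law by a moment/group-velocity computation in the spirit of Konno's original derivation. First I would introduce the spatial Fourier transform $\ket{\hat\Psi_t(k)}=\sum_{x\in\mathbb{Z}}e^{-ikx}\ket{\psi_t(x)}$ for $k\in[-\pi,\pi)$. Under it the shift $\tilde S$ becomes multiplication by the diagonal matrix $\hat S(k)=e^{ik}\ket{0}\bra{0}+e^{-ik}\ket{1}\bra{1}$, while $\tilde C$ acts as $C$ on each fiber. Since the rule (\ref{eq:time-evolution}) has period $3$, one full period sends $\ket{\hat\Psi_{3n}(k)}$ to $U(k)\ket{\hat\Psi_{3n}(k)}$ with $U(k)=\hat S(k)\,\hat S(k)C\,\hat S(k)C$, so $\ket{\hat\Psi_{3t}(k)}=U(k)^t\ket{\hat\Psi_0(k)}$ where $\ket{\hat\Psi_0(k)}=\alpha\ket{0}+\beta\ket{1}$ is independent of $k$.

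Because $\det\hat S=1$ and $\det C=-1$, we have $\det U(k)=1$, so $U(k)$ is unitary with eigenvalues $e^{\pm i\lambda(k)}$ determined by $\cos\lambda(k)=\tfrac12\mathrm{Tr}\,U(k)$; I would compute $\mathrm{Tr}\,U(k)$ explicitly (a trigonometric polynomial in $k$ with coefficients in $c,s$) and record the associated spectral projections $P_\pm(k)$. Writing $U(k)^t=\sum_\pm e^{\pm it\lambda(k)}P_\pm(k)$ and using that the position observable is $i\,d/dk$, the $r$-th moment is $\mathbb{E}[X_{3t}^r]=\int_{-\pi}^{\pi}\bra{\hat\Psi_0}(U^\dagger)^t(i\,d/dk)^r U^t\ket{\hat\Psi_0}\,\tfrac{dk}{2\pi}$. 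For large $t$ the only term of order $t^r$ comes from the derivatives hitting the phases, $(i\,d/dk)^r e^{\pm it\lambda}\sim(\mp t\lambda')^r e^{\pm it\lambda}$, while the oscillating cross terms between the two eigenvalues vanish by Riemann--Lebesgue. Hence
\[
\lim_{t\to\infty}\mathbb{E}\!\left[\left(\tfrac{X_{3t}}{3t}\right)^r\right]=\sum_{\pm}\int_{-\pi}^{\pi}\left(\mp\tfrac{\lambda'(k)}{3}\right)^r\bra{\hat\Psi_0}P_\pm(k)\ket{\hat\Psi_0}\,\tfrac{dk}{2\pi}
\]
for every $r$, which, the limit being compactly supported, pins down the law uniquely by the method of moments and yields the stated convergence of distribution functions.

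Finally I would introduce the group velocity $v(k)=-\lambda'(k)/3$, obtained implicitly from $-\sin\lambda\,\lambda'=\tfrac12(d/dk)\mathrm{Tr}\,U$, and change variables $y=v(k)$ on the $+$ branch and $y=-v(k)$ on the $-$ branch. Summing the Jacobian contributions $dk/|v'(k)|$ over the several preimages of a given $y$ produces the density $f(y)$, while the range of $v$ fixes the support intervals $(\tfrac{1-4c^2}{3},\tfrac{\sqrt{1+8c^2}}{3})$ and its reflection, the endpoints being exactly the critical values where $v'=0$. The part of the weight $\bra{\hat\Psi_0}P_\pm(k)\ket{\hat\Psi_0}$ that is even in the initial data yields $f$, and the odd part, carrying the $|\alpha|^2-|\beta|^2$ and $\Re(\alpha\overline\beta)$ dependence, assembles into the correction $\nu(\alpha,\beta;y)$. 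The main obstacle is computational but substantial: diagonalizing the triple product $U(k)$ and inverting $y=v(k)$ in closed form, that is, showing that the Jacobian and the projections combine into precisely the algebraic expressions $D$, $W_\pm$ and $\sqrt{D}$ appearing in $f$ and $\nu$. Matching these radicals and verifying the indicator intervals is where the delicate branch bookkeeping resides.
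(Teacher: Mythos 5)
Your proposal follows essentially the same route as the paper: the spatial Fourier transform, the one-period operator $\hat S(k)(\hat S(k)C)^2$, its diagonalization, the moment computation in which only the phase derivatives contribute at order $t^r$ (the Grimmett--Janson--Scudo weak-limit scheme), and the change of variables via the group velocity $i\lambda_j'(k)/(3\lambda_j(k))$ to produce the density, its support, and the initial-state correction $\nu$. The paper likewise leaves the final algebraic matching of $D$, $W_\pm$, and the support endpoints implicit, so your outline is a faithful blueprint of its argument.
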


The function $\nu(\alpha,\beta; x)$ is the part of the limit density function which gives the effect of the initial condition $\alpha,\beta$ on the limit behavior, and if the conditions $|\alpha|=|\beta|$ and $\Re(\alpha\overline{\beta})=0 $ are satisfied simultaneously (e.g. $\alpha=1/\sqrt{2}, \beta=i/\sqrt{2}$\,), this term disappears.
Note that $D(x), W_{+}(x), W_{-}(x) >0$ for $x\in \left(-\frac{\sqrt{1+8c^2}}{3},-\frac{1-4c^2}{3}\right) \cup \left(\frac{1-4c^2}{3},\frac{\sqrt{1+8c^2}}{3}\right)$ and $|1-4c^2|<\sqrt{1+8c^2}$.
As examples, Fig.~\ref{fig:limit} shows probability distributions and the limit density functions when $\alpha=1/\sqrt{2},\, \beta=i/\sqrt{2}$.
\begin{figure}[h]
\begin{center}
 \begin{minipage}{70mm}
  \begin{center}
   \includegraphics[scale=0.5]{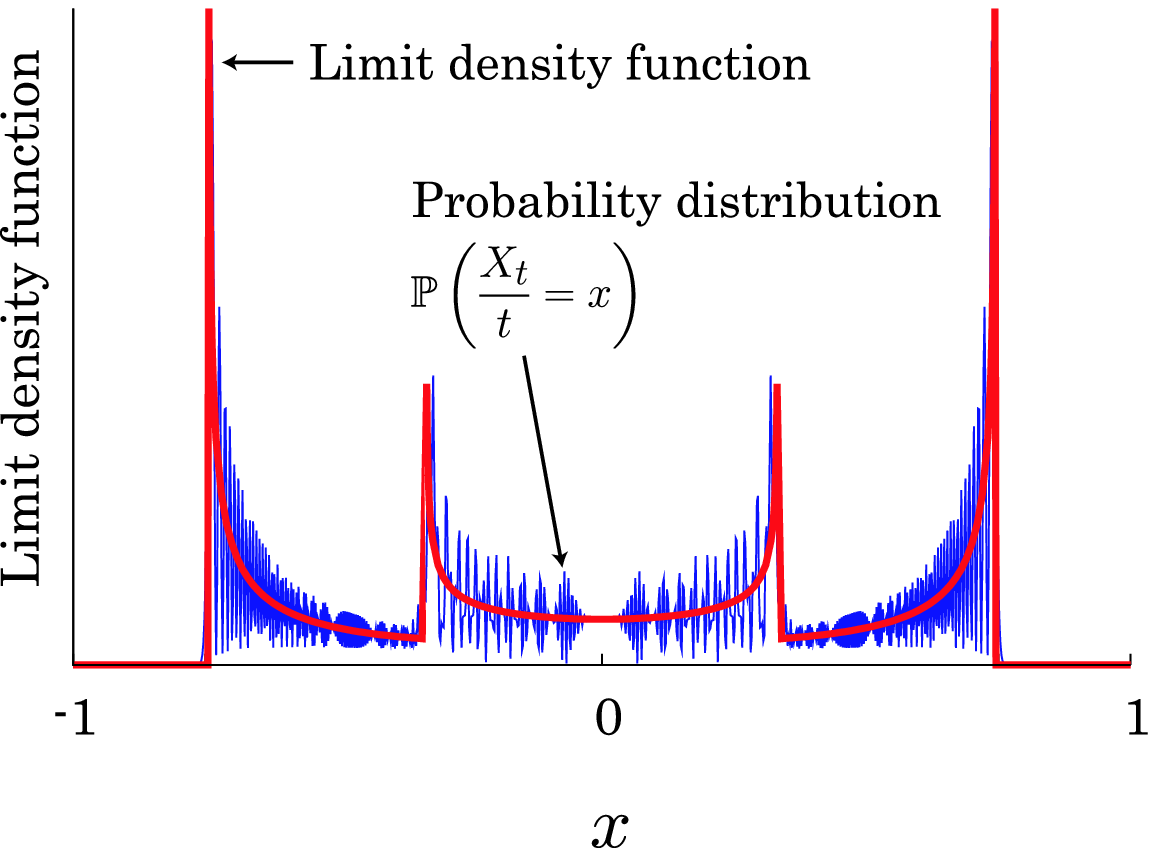}\\[2mm]
  (a) $\theta=\frac{\pi}{4}$
  \end{center}
 \end{minipage}
 \begin{minipage}{70mm}
  \begin{center}
   \includegraphics[scale=0.5]{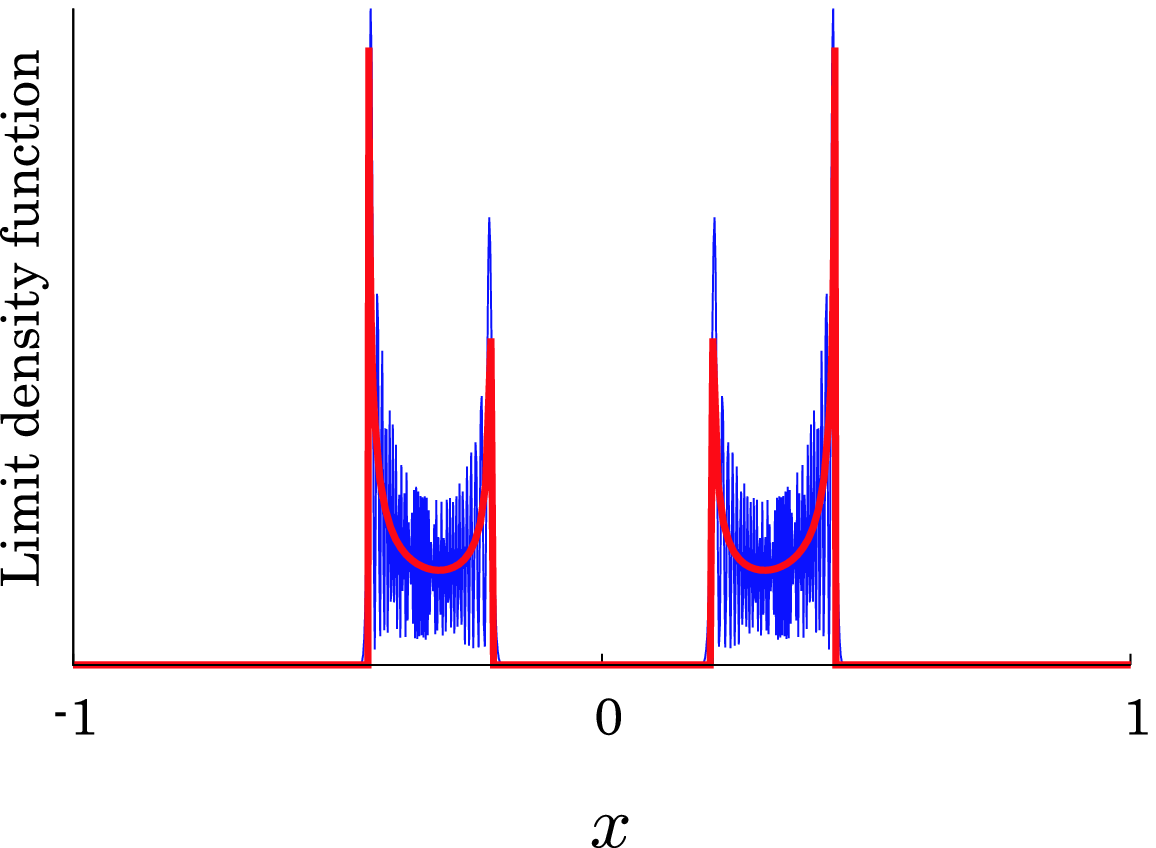}\\[2mm]
  (b) $\theta=\frac{2\pi}{5}$
  \end{center}
 \end{minipage}
\vspace{5mm}
\fcaption{Probability distribution at time $999\,(=3\times 333)$ (blue line) and the limit density function (red line), in the case of $\alpha=1/\sqrt{2}, \beta=i/\sqrt{2}$}
\label{fig:limit}
\end{center}
\end{figure}
\vspace{5mm}

\begin{proof}{
To prove the limit theorem we use Fourier analysis in the way introduced in Grimmett et al.~\cite{GrimmettJansonScudo2004}, and derive a convergence of the $r$-th moment $\mathbb{E}\left[(X_{3t}/3t)^r\right]$ ($r=0,1,2,\ldots$) which is equivalent to a convergence of the generating function $\mathbb{E}[e^{izX_{3t}/3t}]$.

First, we consider the following Fourier transform $\ket{\hat\Psi_t(k)}\, (k\in [-\pi,\pi))$ derived from the states of the walker
\begin{equation}
\ket{\hat\Psi_t(k)}=\sum_{x\in\mathbb{Z}}e^{-ikx}\ket{\psi_t(x)}.
\end{equation}
We should note that we can obtain the state $\ket{\psi_t(x)}$ by using the inverse Fourier transform
\begin{equation}
 \ket{\psi_t(x)}=\int_{-\pi}^\pi e^{ikx}\ket{\hat\Psi_t(k)}\frac{dk}{2\pi}.
\end{equation}
Equation~(\ref{eq:time-evolution}) produces a time evolution of the Fourier transform
\begin{align}
 \ket{\hat\Psi_{3t}(k)}=&\left(\hat S(k)\hat C(k)^2\right)^t\ket{\hat\Psi_0(k)},\nonumber\\
 \ket{\hat\Psi_{3t+1}(k)}=&\hat C(k)\left(\hat S(k)\hat C(k)^2\right)^t\ket{\hat\Psi_0(k)},\\
 \ket{\hat\Psi_{3t+2}(k)}=&\hat C(k)^2\left(\hat S(k)\hat C(k)^2\right)^t\ket{\hat\Psi_0(k)},\nonumber
\end{align}
where $\hat S(k)=e^{ik}\ket{0}\bra{0}+e^{-ik}\ket{1}\bra{1}$ and $\hat C(k)=\hat S(k)C$.
The operator $\hat S(k)$ corresponds to the position-shift operator $\tilde{S}$.

Before computing the $r$-th moment $\mathbb{E}\left(X_{3t}^r\right)$, we get the eigenvalues and the normalized eigenvectors of the unitary matrix $\hat S(k)\hat C(k)^2$ so that we rewrite the Fourier transform $\ket{\hat\Psi_{3t}(k)}$ on the appropriate eigenspace.
Let us take a standard basis as the orthogonal normalized basis $\left\{\ket{0},\ket{1}\right\}$ with
\begin{equation}
 \ket{0}=\left[\begin{array}{c}
	  1\\0
	       \end{array}\right],\quad
 \ket{1}=\left[\begin{array}{c}
	  0\\1
	       \end{array}\right].
\end{equation}
Then the matrix $\hat S(k)\hat C(k)^2$ has two eigenvalues
\begin{equation}
 \lambda_j(k)=c^2\cos 3k+s^2\cos k -(-1)^j i\sqrt{1-(c^2\cos 3k+s^2\cos k)^2} \quad (j=1,2),
\end{equation}
and they are distinct as long as $k\neq -\pi,0$.
Again, we should note that $1-(c^2\cos 3k+s^2\cos k)^2$ is not a negative number and its value is zero if and only if $k= -\pi,0$. 
As one of the possible expressions of the normalized eigenvector corresponding to each eigenvalue $\lambda_j(k)$, we have
\begin{equation}
 \ket{v_j(k)}=\frac{1}{\sqrt{N_j(k)}}\left[\begin{array}{c}
				      -2cs\,e^{2ik}\sin k\\[2mm]
					    c^2\sin 3k+s^2\sin k+(-1)^j\sqrt{1-(c^2\cos 3k+s^2\cos k)^2}
					   \end{array}\right],
\end{equation}
where $N_{j}(k)$ are normalization factors given by 
\begin{align}
 N_{j}(k)=&2\biggl\{1-(c^2\cos 3k+s^2\cos k)^2 \nonumber\\
 &+(-1)^j(c^2\sin 3k+s^2\sin k)\sqrt{1-(c^2\cos 3k+s^2\cos k)^2}\biggr\}.
\end{align}

Here, we treat the $r$-th moments at time $3t$ and express them in the Fourier space by using the eigenvalues $\lambda_j(k)$ and the eigenvectors $\ket{v_j(k)}$.
With a decomposition $\ket{\hat\Psi_{3t}(k)}=\sum_{j=0}^1\lambda_j^t(k)\braket{v_j(k)|\hat\Psi_0(k)}\ket{v_j(k)}$, we get
\begin{align}
 \mathbb{E}(X_{3t}^r)=&\sum_{x\in\mathbb{Z}}x^r\mathbb{P}(X_{3t}=x)\nonumber\\
 =&\int_{-\pi}^\pi \bra{\hat\Psi_{3t}(k)}\left(D^r\ket{\hat\Psi_{3t}(k)}\right)\frac{dk}{2\pi}\nonumber\\
 =&(t)_r\int_{-\pi}^\pi \sum_{j=1}^2 \left(\frac{i\lambda'_j(k)}{\lambda_j(k)}\right)^r\left|\braket{v_j(k)|\hat\Psi_0(k)}\right|^2\frac{dk}{2\pi}+O(t^{r-1}),
 \label{eq:r-th_moment}
\end{align}
where $D=i(d/dk)$ and $(t)_r=t(t-1)\times\cdots\times(t-r+1)$.
Equation~(\ref{eq:r-th_moment}) gives us a convergence as $t\to\infty$,
\begin{equation}
 \lim_{t\to\infty}\mathbb{E}\left[\left(\frac{X_{3t}}{3t}\right)^r\right]=\int_{-\pi}^\pi \sum_{j=1}^2 \left(\frac{i\lambda'_j(k)}{3\lambda_j(k)}\right)^r\left|\braket{v_j(k)|\hat\Psi_0(k)}\right|^2\frac{dk}{2\pi},
 \label{eq:r-th_moment2}
\end{equation}
where
\begin{equation}
 \frac{i\lambda'_j(k)}{3\lambda_j(k)}=(-1)^j \frac{3c^2\sin 3k+s^2\sin k}{3\sqrt{1-(c^2\cos 3k+s^2\cos k)^2}}.
\end{equation}
Setting $i\lambda'_j(k)/3\lambda_j(k)=x$ in Eq.~(\ref{eq:r-th_moment2}) takes us to our goal because we have
\begin{align}
 \lim_{t\to\infty}\mathbb{E}\left[\left(\frac{X_{3t}}{3t}\right)^r\right]=&\int_{-\infty}^\infty x^r \Biggl[\left\{1-\nu(\alpha,\beta; x)\right\}f(x)I_{\left(\frac{1-4c^2}{3},\frac{\sqrt{1+8c^2}}{3}\right)}(x) \nonumber\\
    &+\left\{1+\nu(\alpha,\beta; -x)\right\}f(-x)I_{\left(-\frac{\sqrt{1+8c^2}}{3},-\frac{1-4c^2}{3}\right)}(x)\Biggr]\,dx,
\end{align}
which means that the random variable $X_{3t}/3t$ converges in distribution to a random variable with a density function
\begin{equation}
 \left\{1-\nu(\alpha,\beta; x)\right\}f(x)I_{\left(\frac{1-4c^2}{3},\frac{\sqrt{1+8c^2}}{3}\right)}(x)+\left\{1+\nu(\alpha,\beta; -x)\right\}f(-x)I_{\left(-\frac{\sqrt{1+8c^2}}{3},-\frac{1-4c^2}{3}\right)}(x).
\end{equation}
To obtain the cumulative distribution function on the left hand side of Eq.~(\ref{eq:cumulative}), we need to integrate this density.
}
\end{proof}

\section{Summary and Discussion}
\label{sec:summary}

We have dealt with a 3-period time-dependent discrete-time 2-state QW on the line with the walker located at the origin at the initial time, and gave a limit theorem which gives the asymptotic behavior of the walker after a large number of steps.
On a rescaled space by time, the position of the walker converges in distribution to a random variable. The density function of the random variable has a compact support.
Its shape resembles that of a doubled arcsine distribution.
When we choose the parameter $\theta$, which determines the coin-flip operator $C$, in the open interval $(\pi/3, 2\pi/3) \cup (4\pi/3, 5\pi/3)$, we do not observe the walker at the starting point after a long time as shown in Fig.~\ref{fig:limit}-(b) because the compact support is the open interval $\left(-\frac{\sqrt{1+8\cos^2\theta}}{3},-\frac{1-4\cos^2\theta}{3}\right)\cup\left(\frac{1-4\cos^2\theta}{3},\frac{\sqrt{1+8\cos^2\theta}}{3}\right)$.
For a time-independent walk or a 2-period time-dependent walk starting from the origin, the initial condition at the origin produces a linear function in their limit density functions~\cite{Konno2002a,MachidaKonno2010}.
On the other hand, the 3-period time-dependent walk treated in this paper, features a non-linear term reflecting the initial condition at the origin, which is expressed by $\nu(\alpha,\beta; x)$ in the limit theorem, and the function $\sqrt{D(x)}=\sqrt{1+8c^2-9c^2x^2}$. 
We showed that the limit distribution of the 3-period time-dependent walk is essentially different from that of the time-independent walk or the 2-period time-dependent walk.

We have treated a 3-period time-dependent walk whose coin-state is flipped by only one coin-flip operator $C$ at time  
$t=0,1 \mod 3$, and is shifted without any coin-flip operation at time $t=2 \mod 3$.
We can also see very interesting behavior for a 3-period time-dependent walk with three distinct coin-flip operators by using numerics as displayed in the appendix. We intend to analyze these results carefully in a future publication.
We have described a mathematical property of a 3-period time-dependent walk. 
It would be worth discussing this phenomenon from the perspective of physics, for example it would be nice to explore a possible application to the design of selective pulses in~\cite{MorrisMcIntyreRourkeNgo1989}.

\nonumsection{Acknowledgements}
\noindent T. Machida is grateful to the Japan Society for the Promotion of Science for the support, and to the Math. Dept. UC Berkeley for hospitality.
F.A. Gr\"{u}nbaum acknowledges support from the Applied Math. Sciences subprogram of the Office of Energy Research, US Department of Energy, under
Contract DE-AC03-76SF00098, and from AFOSR grant FA95501210087 through a subcontract to Carnegie Mellon University.



\appendix{\quad A limit distribution for a general unitary operator}
Consider the following time evolution
\begin{equation}
 \ket{\Psi_{t+1}}=\left\{\begin{array}{ll}
		   \tilde{S}\tilde{U}\ket{\Psi_t}& (t=0,1 \mod 3)\\[1mm]
		    \tilde{S}\tilde{J}\ket{\Psi_t}& (t=2 \mod 3)
			 \end{array}\right.,
\end{equation}
where
\begin{align}
 \tilde{U}=&\sum_{x\in\mathbb{Z}}\ket{x}\bra{x}\otimes U,\\
 \tilde{J}=&\sum_{x\in\mathbb{Z}}\ket{x}\bra{x}\otimes J,\\
 U= & a\ket{0}\bra{0}+b\ket{0}\bra{1}+c\ket{1}\bra{0}+d\ket{1}\bra{1}\,\in U(2),\\
 J= & \ket{0}\bra{0}-\frac{\overline{a}d}{|a|^2}\ket{1}\bra{1}.
\end{align}
This differs from in Eq.~(\ref{eq:coin-flip operator}) in that we allow for complex entries in our coin.

\bigskip

\begin{lem}
We take the coin-flip operator which satisfies the condition $abcd\neq 0$.
If the walker starts with $\ket{\Psi_0}=\ket{0}\otimes (\alpha\ket{0}+\beta\ket{1})$, we have
 \begin{align}
   \lim_{t\to\infty}\mathbb{P}\left(\frac{X_{3t}}{3t}\leq x\right)
    =\int_{-\infty}^x \Biggl[&\left\{1-\chi(\alpha,\beta; y)\right\}f(y)I_{\left(\frac{1-4|a|^2}{3},\frac{\sqrt{1+8|a|^2}}{3}\right)}(y) \nonumber\\
    &+\left\{1+\chi(\alpha,\beta; -y)\right\}f(-y)I_{\left(-\frac{\sqrt{1+8|a|^2}}{3},-\frac{1-4|a|^2}{3}\right)}(y)\Biggr]\,dy,
 \end{align}
where
 \begin{align}
  f(x)=&\frac{|b|\left(|b|x+\sqrt{D(x)}\right)^2}{\pi(1-x^2)\sqrt{W_{+}(x)}\sqrt{W_{-}(x)}\sqrt{D(x)}},\\[3mm]
  \chi(\alpha,\beta; x)=&\frac{1}{|a|^2(1+8|a|^2)}\left\{9|a|^4(|\alpha|^2-|\beta|^2)+3(1+6|a|^2)\Re(a\alpha\overline{b\beta})\right\}x\nonumber\\
  &+\frac{1}{|a^2b|(1+8|a|^2)}\left\{|ab|^2(|\alpha|^2-|\beta|^2)-(1+2|a|^2)\Re(a\alpha\overline{b\beta})\right\}\sqrt{D(x)},\\[3mm]
  D(x)=&1+8|a|^2-9|a|^2x^2,\\
  W_{+}(x)=&-(1-4|a|^2)+3(1-2|a|^2)x^2+2|b|x\sqrt{D(x)},\\
  W_{-}(x)=&1+8|a|^2-3(1+2|a|^2)x^2-2|b|x\sqrt{D(x)}.
 \end{align}
 \label{app:th:limit}
\end{lem}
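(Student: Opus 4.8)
The plan is to transcribe the Fourier-analytic proof of Theorem~\ref{th:limit} almost verbatim, the only new ingredient being the replacement of the single coin $C$ by the pair $(U,J)$. First I would introduce the Fourier transform $\ket{\hat\Psi_t(k)}$ and the operator $\hat S(k)=e^{ik}\ket{0}\bra{0}+e^{-ik}\ket{1}\bra{1}$ exactly as before, so that the three-step evolution becomes $\ket{\hat\Psi_{3t}(k)}=M(k)^t\ket{\hat\Psi_0(k)}$ with the one-period operator $M(k)=\hat S(k)J\,\hat S(k)U\,\hat S(k)U$. Everything then reduces to diagonalizing the $2\times 2$ matrix $M(k)$ and feeding its eigendata into the moment formula (\ref{eq:r-th_moment2}).

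The crux, and the reason for the seemingly ad hoc choice $J=\ket{0}\bra{0}-\frac{\overline{a}d}{|a|^2}\ket{1}\bra{1}$, is the eigenvalue computation. Note first that $J$ is itself unitary, since $\bigl|\overline{a}d/|a|^2\bigr|=|d|/|a|=1$ by the $U(2)$ relations $|a|=|d|$, $|b|=|c|$; hence $M(k)$ is a product of unitaries and $\det M(k)=\det J\,(\det U)^2=:\omega^2$ is a constant independent of $k$. Its two eigenvalues therefore factor as $\lambda_j(k)=\omega\,\mu_j(k)$ with $\mu_1\mu_2=1$, i.e. $\mu_2=\overline{\mu_1}$, and I claim the defining property of $J$ is that it forces
\[
 \frac{1}{2}\,\overline{\omega}\,\mathrm{tr}\,M(k)=\Re\,\mu_1(k)=|a|^2\cos 3k+|b|^2\cos k .
\]
Granting this, $\mu_j(k)=(|a|^2\cos 3k+|b|^2\cos k)-(-1)^j i\sqrt{1-(|a|^2\cos 3k+|b|^2\cos k)^2}$, the constant phase $\omega$ cancels in the logarithmic derivative $i\lambda_j'(k)/3\lambda_j(k)=i\mu_j'(k)/3\mu_j(k)$, and this velocity is literally the one from Theorem~\ref{th:limit} under $c\mapsto|a|$, $s\mapsto|b|$. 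Consequently the change of variables $x=i\mu_j'(k)/3\mu_j(k)$ produces the same Jacobian and the same density $f$, now written in $|a|,|b|$.

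It then remains to recompute the initial-condition weight $\bigl|\braket{v_j(k)|\hat\Psi_0(k)}\bigr|^2$. I would write the normalized eigenvectors $\ket{v_j(k)}$ of $M(k)$—structurally identical to those in Theorem~\ref{th:limit} but now carrying the phases of $a$ and $b$ through the off-diagonal entry—and expand against $\ket{\hat\Psi_0(k)}=\alpha\ket{0}+\beta\ket{1}$. The modulus-squared terms reproduce $|\alpha|^2-|\beta|^2$ unchanged, while the cross term that read $\Re(\alpha\overline{\beta})$ in the orthogonal case is promoted to $\Re(a\alpha\overline{b\beta})$, precisely because the coin phases enter the eigenvector only in the combination $a\overline{b}$. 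Substituting into (\ref{eq:r-th_moment2}), performing the same $x$-substitution, and integrating the resulting density then yields the stated cumulative distribution with $\chi(\alpha,\beta;x)$ in place of $\nu(\alpha,\beta;x)$.

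The one genuinely new difficulty is the displayed trace identity: establishing $\frac{1}{2}\overline{\omega}\,\mathrm{tr}\,M(k)=|a|^2\cos 3k+|b|^2\cos k$ requires multiplying out the triple product $\hat S(k)J\hat S(k)U\hat S(k)U$ with four complex entries and then checking that the $U(2)$ relations together with the exact phase $-\overline{a}d/|a|^2$ conspire to cancel every imaginary contribution to $\overline{\omega}\,\mathrm{tr}\,M(k)$ and collapse the rest onto $\cos 3k$ and $\cos k$. A wrong phase in $J$ would spoil this cancellation, so this is where the specific form of $J$ is indispensable; once it is verified, the remainder—including the sign checks $D(x),W_{\pm}(x)>0$ on the support—is a routine transcription of the orthogonal argument.
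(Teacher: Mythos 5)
Your overall strategy --- rerunning the Grimmett--Janson--Scudo argument on the one-period operator $M(k)=\hat S(k)J\,\hat S(k)U\,\hat S(k)U$ --- is viable, and the preliminary reductions are correct: $J$ is unitary, $\det M(k)$ is a $k$-independent unimodular constant $\omega^2$, the eigenvalues factor as $\omega\mu_j(k)$ with $\mu_2=\overline{\mu_1}$, and the constant phase drops out of $i\lambda_j'/\lambda_j$. But the identity you single out as the crux,
\[
\tfrac{1}{2}\,\overline{\omega}\,\mathrm{tr}\,M(k)=|a|^2\cos 3k+|b|^2\cos k,
\]
is false for a general unitary coin. Parametrizing $a=e^{i(\gamma+\xi)}\cos\theta$, $b=e^{i(\gamma-\xi)}\sin\theta$, $c=e^{i(\delta+\xi)}\sin\theta$, $d=-e^{i(\delta-\xi)}\cos\theta$, a direct expansion gives $\mathrm{tr}\,M(k)=2e^{i(\gamma/2+3\delta/2-\xi)}\bigl(|a|^2\cos 3k'+|b|^2\cos k'\bigr)$ with $k'=k+\kappa_0$ and $\kappa_0=(\gamma-\delta+2\xi)/2=(\arg a-\arg d+\pi)/2$. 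Your unshifted identity holds only when $\kappa_0\in\{0,\pi\}$, i.e.\ only when $d=-a$ (essentially the orthogonal case already covered by Theorem~\ref{th:limit}); for, say, $a=\cos\theta$, $d=-e^{i\delta}\cos\theta$ with $\delta\neq 0$, the cancellation you propose to verify by brute force simply does not occur at fixed $k$, and the computation you describe as ``the one genuinely new difficulty'' would end in a contradiction with your claim. The argument is rescued by an observation you never make: $\kappa_0$ is a constant, every quantity entering the moment integral (group velocity, eigenvector overlaps with the $k$-independent initial spinor) is a function of $k'$ alone, and the substitution $k\mapsto k'$ over the full period $[-\pi,\pi)$ leaves the integral invariant. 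Without that extra step the proof stalls exactly where you locate its core.

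For comparison, the paper avoids any recomputation: it establishes the operator identity $\hat S(k)J(\hat S(k)U)^2=e^{i(\gamma/2+3\delta/2-\xi)}\hat S(-\xi)\,\hat S(k')\{\hat S(k')C\}^2\,\hat S(\xi)$, exhibiting the general one-period operator as a global phase times a conjugation by the constant diagonal unitary $\hat S(\xi)$ of the Theorem~\ref{th:limit} operator evaluated at the shifted momentum $k'$. Theorem~\ref{th:limit} then applies verbatim with initial state $(\alpha e^{i\xi},\beta e^{-i\xi})$, and $\Re\bigl(\alpha e^{i\xi}\overline{\beta e^{-i\xi}}\bigr)=\Re(a\alpha\overline{b\beta})/|ab|$ turns $\nu$ into $\chi$ --- which, incidentally, confirms that your heuristic about the cross term being controlled by $a\overline{b}$ is right in substance. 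The paper's reduction carries, built in, precisely the momentum shift your version omits.
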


\begin{proof}{
We parametrize the unitary operator $U$ by the choices $a=e^{i(\gamma+\xi)}\cos\theta, b=e^{i(\gamma-\xi)}\sin\theta, c=e^{i(\delta+\xi)}\sin\theta, d=-e^{i(\delta-\xi)}\cos\theta\,(\gamma,\delta,\xi,\theta\in [0,2\pi))$ and get
\begin{equation}
 \hat{S}(k)J\left(\hat{S}(k)U\right)^2=e^{i\left(\frac{\gamma}{2}+\frac{3\delta}{2}-\xi\right)}\hat{S}(-\xi)\hat{S}\left(k+\frac{\gamma-\delta+2\xi}{2}\right)\left\{\hat{S}\left(k+\frac{\gamma-\delta+2\xi}{2}\right)C\right\}^2\hat{S}(\xi).
\end{equation}
From Theorem~\ref{th:limit}, hence, we have
\begin{align}
 \lim_{t\to\infty}\mathbb{P}\left(\frac{X_{3t}}{3t}\leq x\right)
 =&\int_{-\infty}^x \Biggl[\left\{1-\nu(\alpha e^{i\xi},\beta e^{-i\xi}; y)\right\}f(y)I_{\left(\frac{1-4\cos^2\theta}{3},\frac{\sqrt{1+8\cos^2\theta}}{3}\right)}(y) \nonumber\\
 &+\left\{1+\nu(\alpha e^{i\xi},\beta e^{-i\xi}; -y)\right\}f(-y)I_{\left(-\frac{\sqrt{1+8\cos^2\theta}}{3},-\frac{1-4\cos^2\theta}{3}\right)}(y)\Biggr]\,dy,
\end{align}
for $\theta\neq 0,\frac{\pi}{2},\pi,\frac{3\pi}{2}$.
}
\end{proof}

\clearpage

\appendix{\quad Using three different flip coins}

The following figures show the results of some numerical simulations using three different coins.
We focus on a general 3-period time evolution
\begin{equation}
 \ket{\Psi_{t+1}}=\left\{\begin{array}{ll}
		   \tilde{S}\tilde{U}_0\ket{\Psi_t}& (t=0\mod 3)\\[1mm]
		   \tilde{S}\tilde{U}_1\ket{\Psi_t}& (t=1\mod 3)\\[1mm]
		   \tilde{S}\tilde{U}_2\ket{\Psi_t}& (t=2\mod 3)
			 \end{array}\right.,
\end{equation}
where
\begin{equation}
 U_j=\left[\begin{array}{cc}
      e^{i\gamma_j}&0 \\
	    0&e^{i\delta_j}
	   \end{array}\right]
 \left[\begin{array}{cc}
  \cos\theta_j & \sin\theta_j\\
	\sin\theta_j & -\cos\theta_j
       \end{array}\right]
 \left[\begin{array}{cc}
  e^{i\xi_j}&0 \\
	0& e^{-i\xi_j}
       \end{array}\right]\quad(j=0,1,2).
\end{equation}
\begin{figure}[h]
\begin{center}
 \begin{minipage}{70mm}
  \begin{center}
   \includegraphics[scale=0.45]{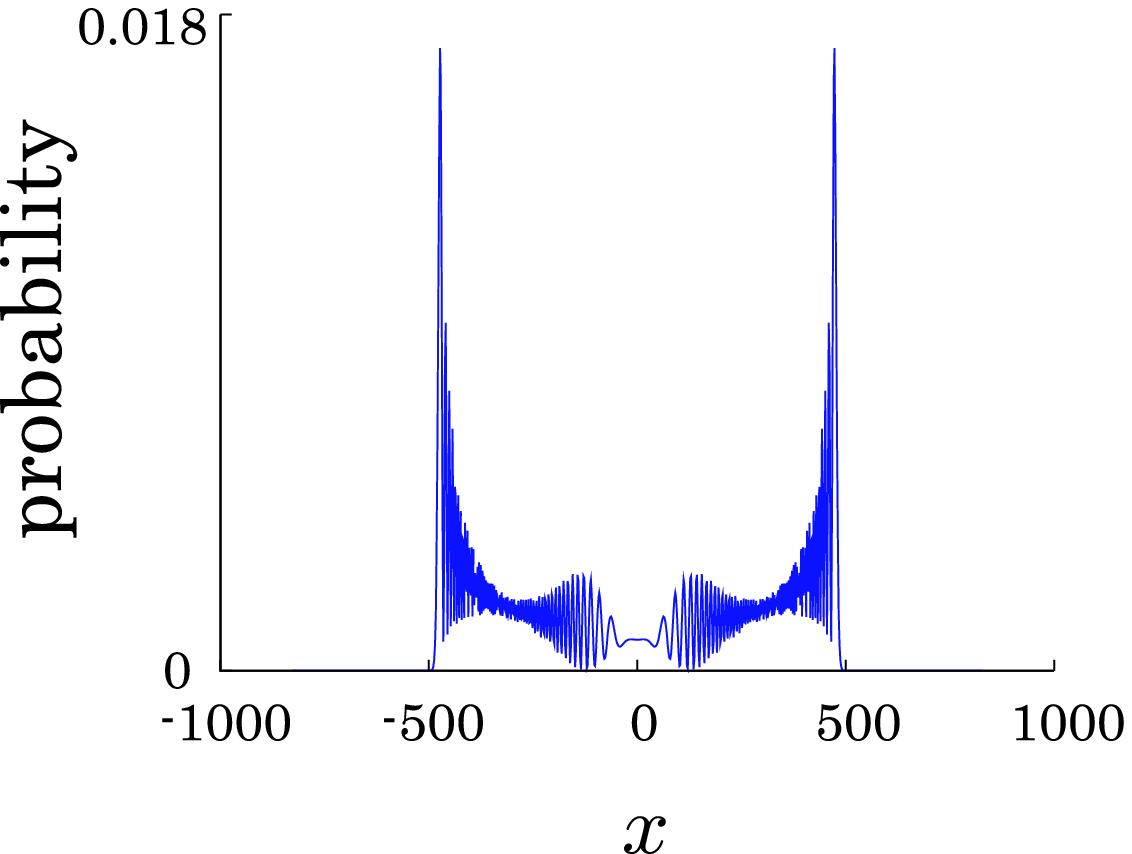}\\[2mm]
  (a)\\[1mm] $(\gamma_0,\delta_0,\xi_0)=(0,0,0)$\\$(\gamma_1,\delta_1,\xi_1)=(0,0,0)$\\$(\gamma_2,\delta_2,\xi_2)=(0,0,0)$
  \end{center}
 \end{minipage}
 \begin{minipage}{70mm}
  \begin{center}
   \includegraphics[scale=0.45]{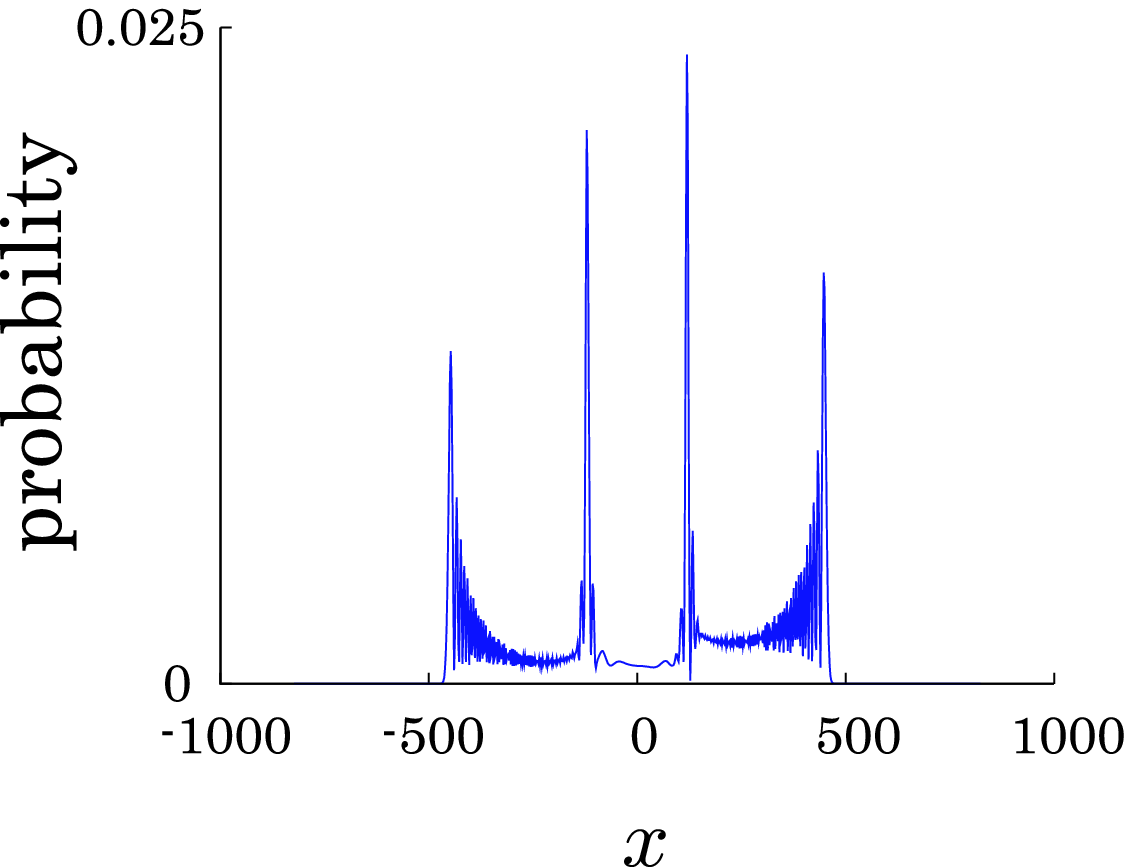}\\[2mm]
  (b)\\[1mm] $(\gamma_0,\delta_0,\xi_0)=(\frac{\pi}{4},0,0)$\\$(\gamma_1,\delta_1,\xi_1)=(0,0,0)$\\$(\gamma_2,\delta_2,\xi_2)=(0,0,0)$
  \end{center}
 \end{minipage}

 \bigskip

 \begin{minipage}{70mm}
  \begin{center}
   \includegraphics[scale=0.45]{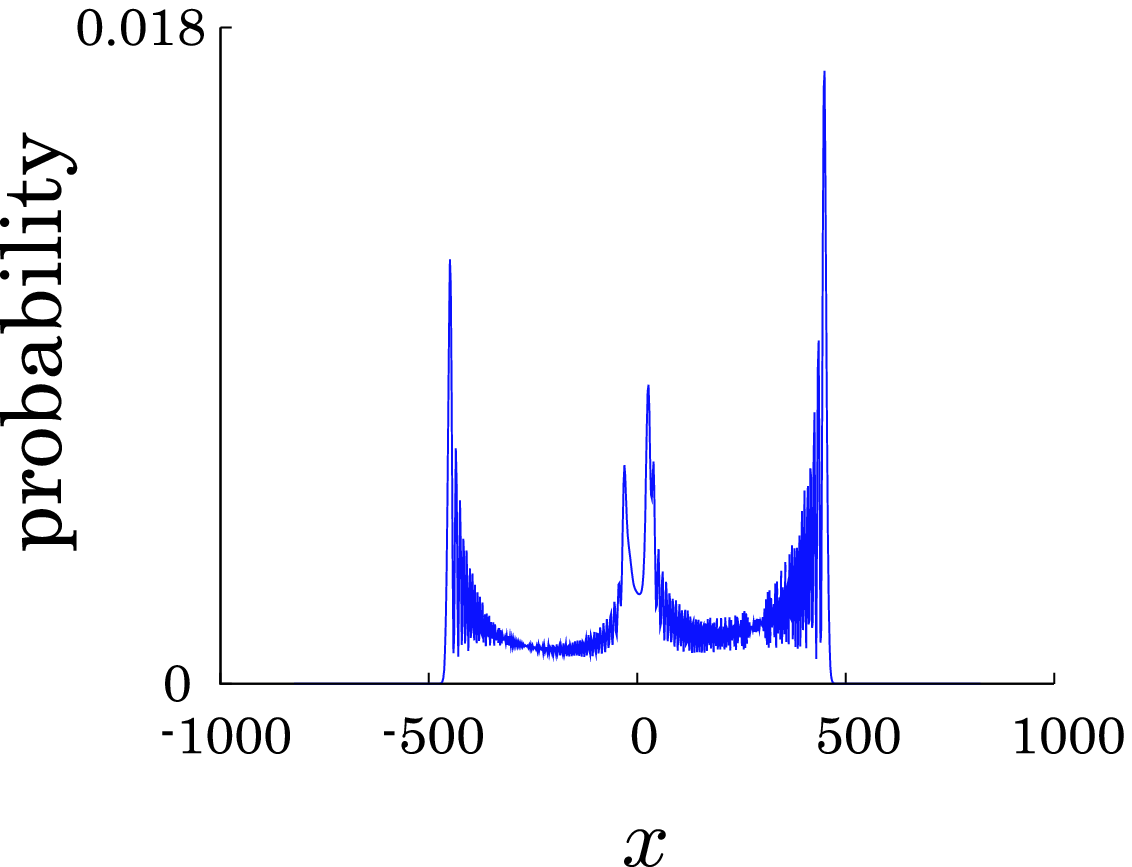}\\[2mm]
  (c)\\[1mm] $(\gamma_0,\delta_0,\xi_0)=(0,0,0)$\\$(\gamma_1,\delta_1,\xi_1)=(0,0,0)$\\$(\gamma_2,\delta_2,\xi_2)=(\frac{\pi}{4},0,0)$
  \end{center}
 \end{minipage}
 \begin{minipage}{70mm}
  \begin{center}
   \includegraphics[scale=0.45]{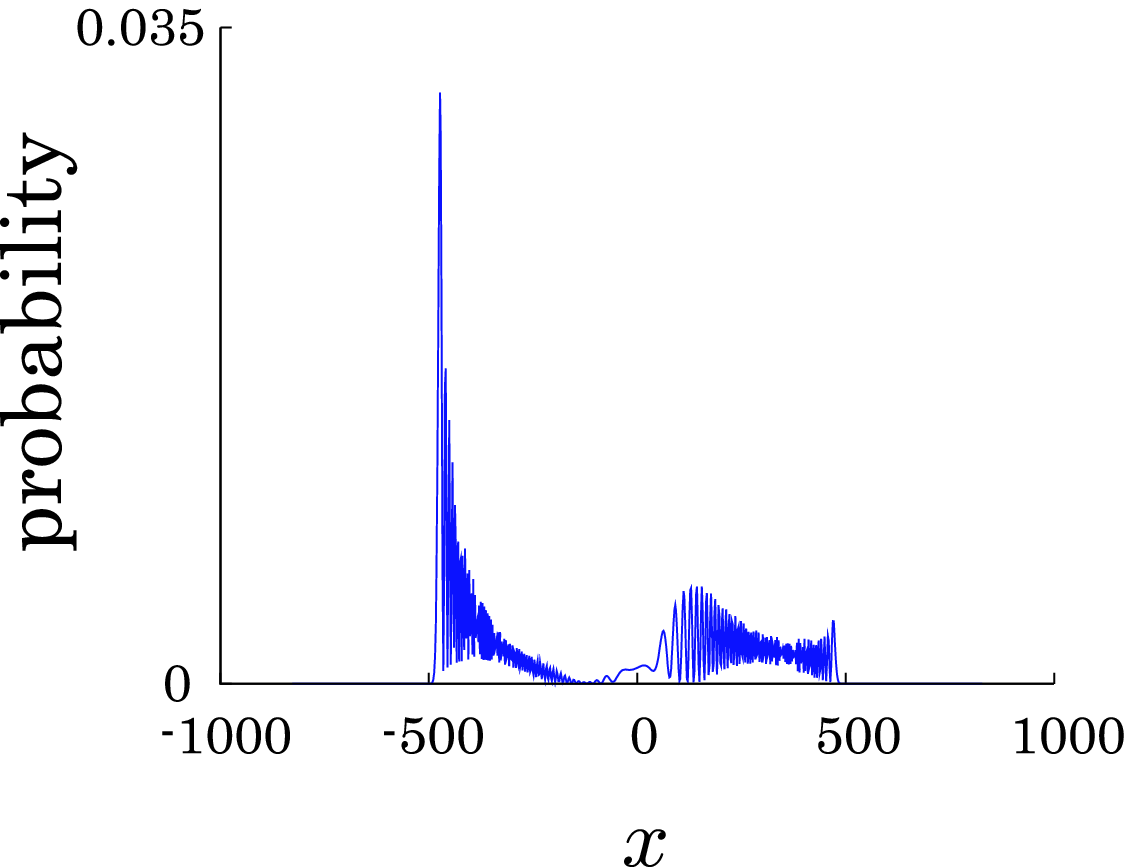}\\[2mm]
  (d)\\[1mm] $(\gamma_0,\delta_0,\xi_0)=(\frac{\pi}{2},\frac{\pi}{2},\frac{\pi}{2})$\\$(\gamma_1,\delta_1,\xi_1)=(\frac{\pi}{3},\frac{\pi}{3},\frac{\pi}{3})$\\$(\gamma_2,\delta_2,\xi_2)=(\frac{\pi}{4},\frac{\pi}{4},\frac{\pi}{4})$
  \end{center}
 \end{minipage}
\vspace{5mm}
\fcaption{Probability distributions at time $t=999$ in the case of $\theta_1=\frac{2\pi}{5}, \theta_2=\frac{\pi}{3}, \theta_3=\frac{\pi}{4}, \alpha=1/\sqrt{2}, \beta=i/\sqrt{2}$}
\end{center}
\end{figure}

\clearpage

\appendix{\quad Probability distributions at time $3t+1,\, 3t+2$ and the limit density functions}
\label{app:3t+1_and_3t+2}

We show comparisons between the probability distributions of the rescaled random valuables $\frac{X_{3t+1}}{3t+1},\,\frac{X_{3t+2}}{3t+2}$ and the limit density functions which follow from Theorem~\ref{th:limit}.
\bigskip

\begin{figure}[h]
\begin{center}
 \begin{minipage}{70mm}
  \begin{center}
   \includegraphics[scale=0.5]{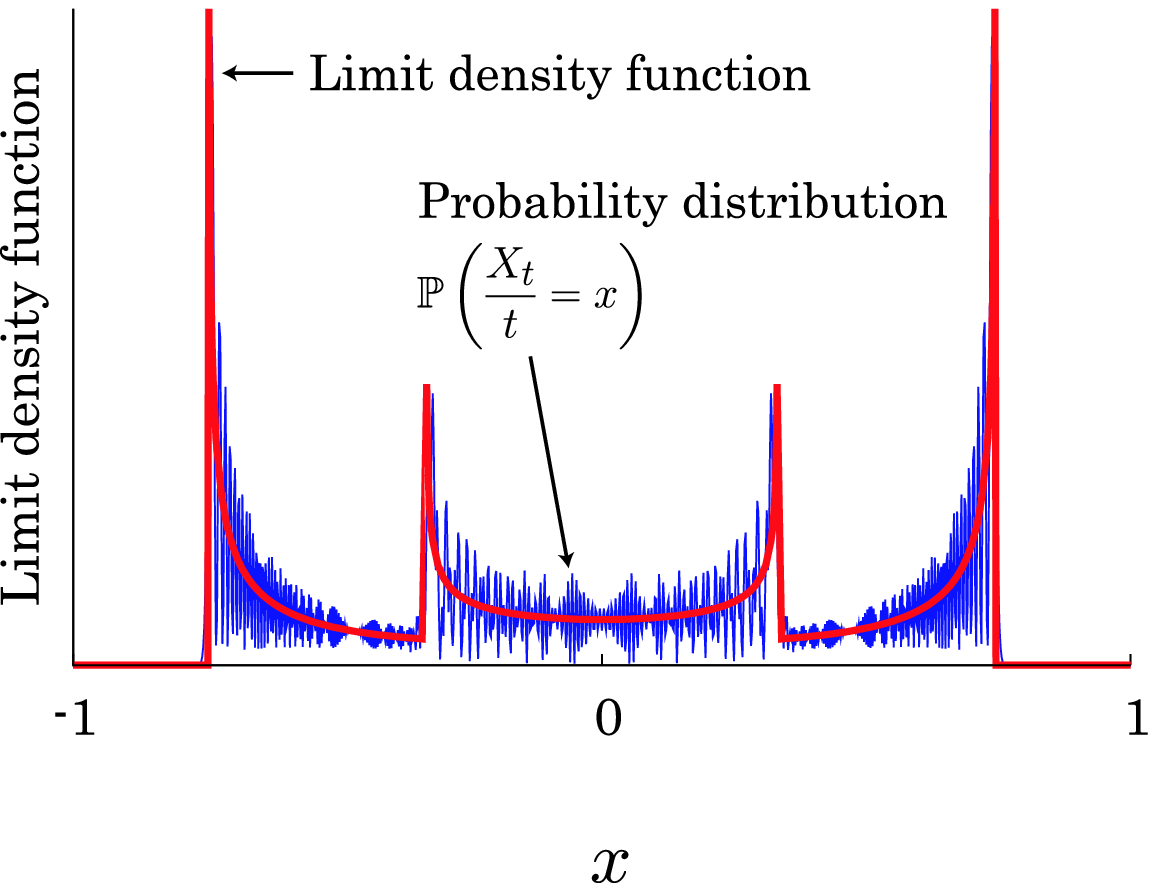}\\[2mm]
  (a) $\theta=\frac{\pi}{4}$
  \end{center}
 \end{minipage}
 \begin{minipage}{70mm}
  \begin{center}
   \includegraphics[scale=0.5]{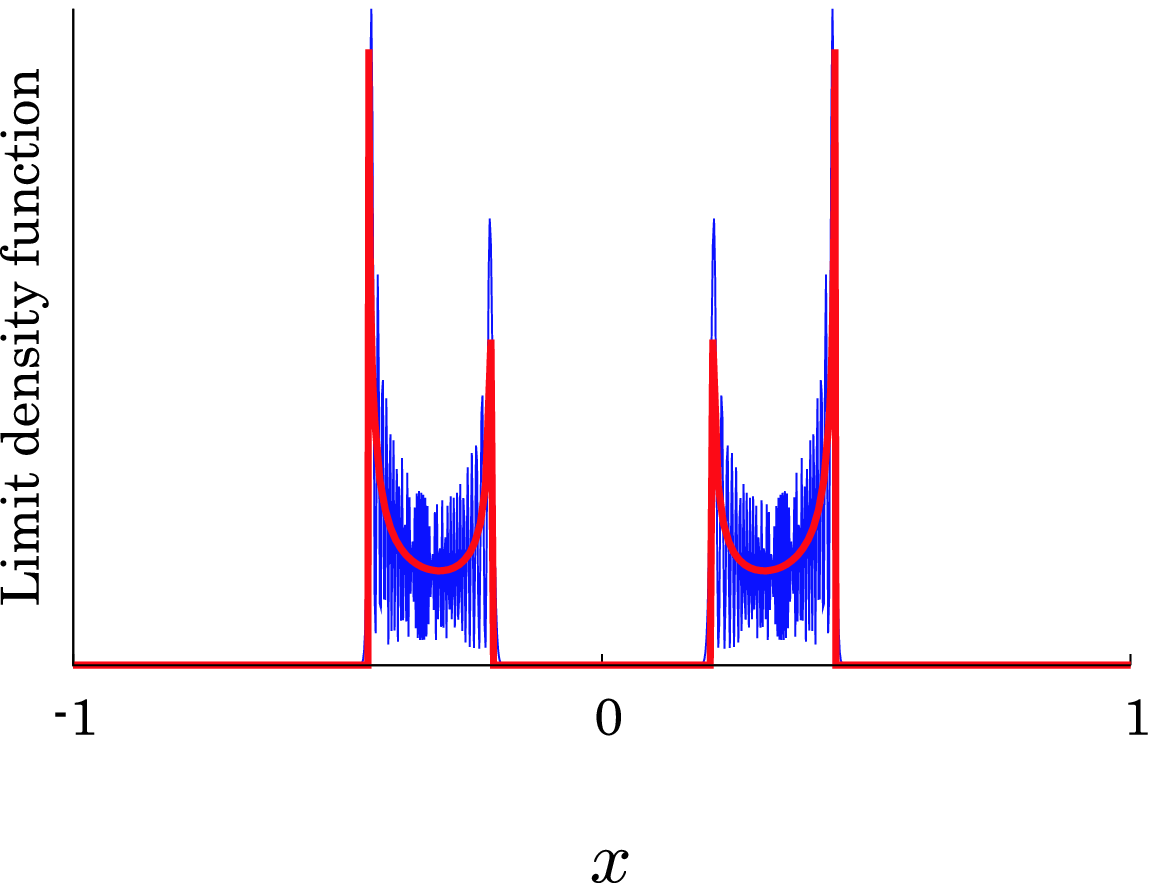}\\[2mm]
  (b) $\theta=\frac{2\pi}{5}$
  \end{center}
 \end{minipage}
\vspace{5mm}
\fcaption{Probability distribution at time $1000\,(=3\times 333+1)$ (blue line) and the limit density function (red line), in the case of $\alpha=1/\sqrt{2}, \beta=i/\sqrt{2}$}
\end{center}
\end{figure}

\begin{figure}[h]
\begin{center}
 \begin{minipage}{70mm}
  \begin{center}
   \includegraphics[scale=0.5]{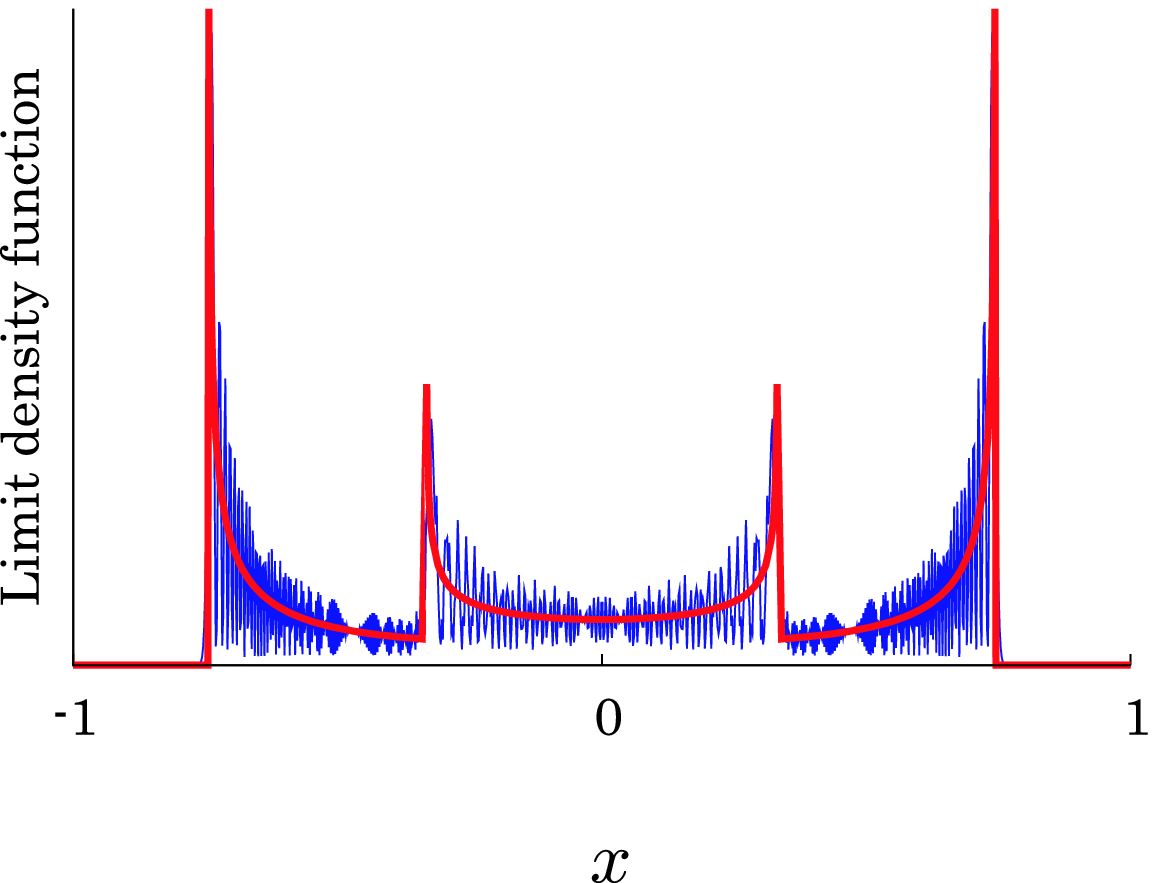}\\[2mm]
  (a) $\theta=\frac{\pi}{4}$
  \end{center}
 \end{minipage}
 \begin{minipage}{70mm}
  \begin{center}
   \includegraphics[scale=0.5]{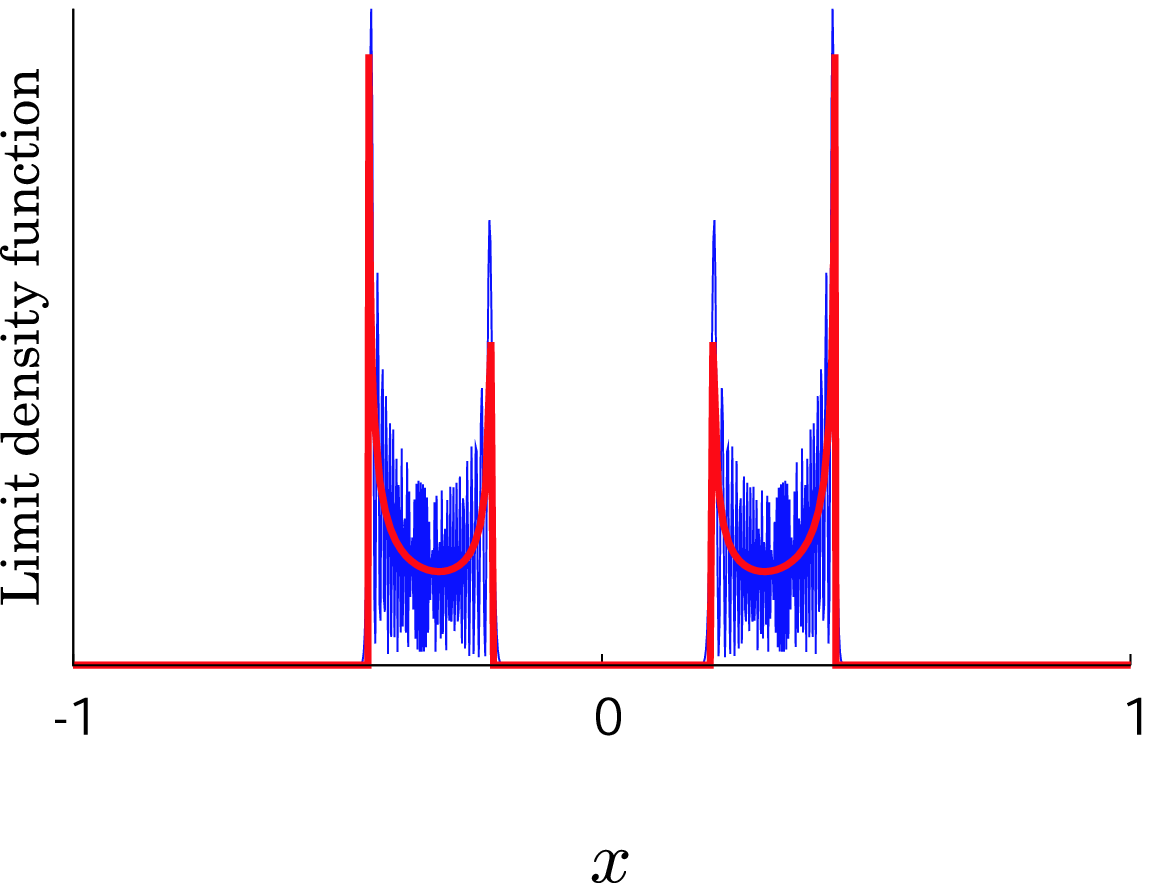}\\[2mm]
  (b) $\theta=\frac{2\pi}{5}$
  \end{center}
 \end{minipage}
\vspace{5mm}
\fcaption{Probability distribution at time $1001\,(=3\times 333+2)$ (blue line) and the limit density function (red line), in the case of $\alpha=1/\sqrt{2}, \beta=i/\sqrt{2}$}
\end{center}
\end{figure}

\end{document}